\newtheorem{Thm}{Theorem}
\theoremstyle{definition}
\newcommand{\bra}[1]{{\left\langle #1 \right|}}
\newcommand{\ket}[1]{{\left| #1 \right\rangle}}
\newcommand{\T}{\mbox{$\mathrm{tr}$}}
\begin{document}
\title{Tsallis entropy and entanglement constraints in multi-qubit systems}

\author{Jeong San Kim}
\email{jekim@ucalgary.ca} \affiliation{
 Institute for Quantum Information Science,
 University of Calgary, Alberta T2N 1N4, Canada
}

\date{\today}

\begin{abstract}
We show that the restricted sharability and distribution of
multi-qubit entanglement can be characterized by Tsallis-$q$
entropy. We first provide a class of bipartite entanglement measures
named Tsallis-$q$ entanglement, and provide its analytic formula in
two-qubit systems for $1 \leq q \leq 4$. For $2 \leq q \leq 3$, we
show a monogamy inequality of multi-qubit entanglement in terms of
Tsallis-$q$ entanglement, and we also provide a polygamy inequality
using Tsallis-$q$ entropy for $1 \leq q \leq 2$ and $3 \leq q \leq
4$.
\end{abstract}

\pacs{
03.67.Mn,  
03.65.Ud 
}
\maketitle

\section{Introduction}

Whereas classical correlations can be freely shared among parties in
multi-party systems, quantum correlation especially quantum
entanglement is known to have some restriction in its sharability
and distribution. For example, in a tripartite system consisting of
parties $A$, $B$ and $C$, let us assume $A$ is maximally entangled
with both $B$ and $C$ simultaneously. Because maximal entanglement
can be used to teleport an arbitrary unknown quantum
state~\cite{tele}, $A$ can teleport an unknown state $\rho$ to $B$
and $C$ by using the simultaneous maximal entanglement. Now, each
$B$ and $C$ has an identical copy of $\rho$, and this means cloning
an unknown state $\rho$, which is impossible by {\em no-cloning}
theorem~\cite{noclon}. In other words, the assumption of
simultaneous maximal entanglement of $A$ with $B$ and $C$ is quantum
mechanically forbidden.

This restricted sharability of quantum entanglement is known as the
{\em Monogamy of Entanglement}~(MoE)~\cite{T04}, and it was also
shown to play an important role in many applications of quantum
information processing. For instance, in quantum cryptography, MoE
can be used to restrict the possible correlation between authorized
users and the eavesdropper, which is the basic concept of the
security proof~\cite{m}.

For three-qubit systems, MoE was first characterized in forms of a
mathematical inequality using {\em concurrence}~\cite{ww} as the
bipartite entanglement measure. This characterization is known as
{\em CKW inequality} named after its establishers, Coffman, Kundu
and Wootters~\cite{ckw}, and it was also generalized for multi-qubit
systems later~\cite{ov}.

MoE in multi-qubit systems is mathematically well-characterized in
terms of concurrence, it is however also known that CKW-type
characterization for MoE is not generally true for other
entanglement measures such as {\em Entanglement of
Formation}~(EoF)~\cite{bdsw}: Even in multi-qubit systems, there
exists an counterexample that violates CKW-type inequality in terms
of EoF.

As bipartite entanglement measures, both concurrence and EoF of a
bipartite pure state $\ket{\psi}_{AB}$ quantify the uncertainty of
the subsystem $\rho_{A}=\T_{B} \ket{\psi}_{AB}\bra{\psi}$. For the
case when $\ket{\psi}_{AB}$ is a two-qubit state, the uncertainty of
$\rho_A$ is completely determined by a single parameter.
Furthermore, the extension of concurrence and that of Eof for a
mixed state $\rho_{AB}$ are based on the same method of {\em
convex-roof extension}, which minimizes the average entanglement
over all possible pure state decompositions of $\rho_{AB}$. In other
words, concurrence and EoF for two-qubit states are essentially
equivalent based on the same concept, the uncertainty of the
subsystem. Moreover, it was also shown that these two measures are
related by an monotone-increasing convex function~\cite{ww}.

However, these two equivalent measures for two-qubit systems show
very different properties in multipartite systems in characterizing
MoE, and this exposes the importance of having proper entanglement
measures to characterize MoE even in multi-qubit systems. Moreover,
for the study of general MoE in multipartite higher-dimensional
quantum systems, having a proper bipartite entanglement measure is
one of the most important and necessary things that must precede.

As generalizations of von Neumann entropy, there are two
representative classes of entropies quantifying the uncertainty of
quantum systems: One is quantum R\'enyi entropy~\cite{renyi, horo},
and the other is quantum Tsallis entropy~\cite{tsallis, lv}. Both of
them are one-parameter classes parameterized by a nonnegative real
number $q$, having von Neumann entropy as a special case when
$q\rightarrow 1$. Recently, it was shown that R\'enyi entropy can be
used for CKW-type characterization of multi-qubit monogamy
~\cite{ks2}.

Here, we show that Tsallis entropy can characterize MoE in
multi-qubit systems for a selective choice of the parameter $q$.
Using quantum Tsallis entropy of order $q$ (or Tsallis-$q$ entropy),
we first provide an one-parameter class of bipartite entanglement
measures, {\em Tsallis-$q$ entanglement}, and provide its analytic
formula for arbitrary two-qubit states when $1\leq q\leq 4$. This
class contains EoF as a special case when $q\rightarrow1$.
Furthermore, we show the monogamy inequality of multi-qubit systems
in terms of Tsallis-$q$ entanglement for $2\leq q\leq3$. For $1\leq
q \leq 2$ or $3\leq q \leq 4$, we also provide a polygamy inequality
of multi-qubit entanglement using Tsallis-$q$ entropy.

This paper is organized as follows. In Section~\ref{Subsec:
definition}, we recall the definition of Tsallis-$q$ entropy, and
define Tsallis-$q$ entanglement and its dual quantity for bipartite
quantum states. In Section~\ref{Subsec: 2formula}, we provide an
analytic formula of Tsallis-$q$ entanglement for arbitrary two-qubit
states when $1\leq q \leq 4$. In Section~\ref{Sec: monopoly}, we
derive a monogamy inequality of multi-qubit entanglement in terms of
Tsallis-$q$ entanglement for $2\leq q\leq3$. We also provide a
polygamy inequality of multi-qubit entanglement for $1\leq q \leq 2$
or $3\leq q \leq 4$. Finally, we summarize our results in
Section~\ref{Conclusion}.


\section{Tsallis-$q$ Entanglement}
\label{Sec: Tqentanglement}
\subsection{Definition}
\label{Subsec: definition}
For any quantum state $\rho$, its Tsallis-$q$ entropy is defined as
\begin{equation}
T_{q}(\rho)=\frac{1}{q-1}\left(1- \T \rho^{q}\right),
\label{r-entropy}
\end{equation}
for any $q >0$ and $q \neq 1$. For the case when $\alpha$ tends to
1, $T_{q}(\rho)$ converges to the von Neumann entropy, that is
\begin{align}
\lim_{q \rightarrow 1}T_{q}(\rho)=-\T \rho\log\rho 
=S(\rho). \label{T1}
\end{align}
In other words, Tsallis-$q$ entropy has a singularity at $q=1$, and
it can be replaced by von Neumann entropy. Throughout this paper, we
will just consider $T_1(\rho)=S(\rho)$ for any quantum state $\rho$.

For a bipartite pure state $\ket{\psi}_{AB}$ and each $q>0$,
Tsallis-$q$ entanglement is
\begin{equation}
{\mathcal T}_{q}\left(\ket{\psi}_{AB} \right):=T_{q}(\rho_A),
\label{TEpure}
\end{equation}
where $\rho_A=\T _{B} \ket{\psi}_{AB}\bra{\psi}$ is the reduced
density matrix onto subsystem $A$. For a mixed state $\rho_{AB}$, we
define its  Tsallis-$q$ entanglement via convex-roof extension, that
is,
\begin{equation}
{\mathcal T}_{q}\left(\rho_{AB} \right):=\min \sum_i p_i {\mathcal T}_{q}(\ket{\psi_i}_{AB}),
\label{TEmixed}
\end{equation}
where the minimum is taken over all possible pure state
decompositions of $\rho_{AB}=\sum_{i}p_i
\ket{\psi_i}_{AB}\bra{\psi_i}$.

As a dual quantity to Tsallis-$q$ entanglement, we also define
{\em Tsallis-$q$ entanglement of Assistance} (TEoA) as
\begin{equation}
{\mathcal T}^a_{q}\left(\rho_{AB} \right):=\max \sum_i p_i {\mathcal T}_{q}(\ket{\psi_i}_{AB}),
\label{TEoA}
\end{equation}
where the maximum is taken over all possible pure state
decompositions of $\rho_{AB}$.

Because Tsallis-$q$ entropy converges to von Neumann entropy when $q$ tends to 1,
we have
\begin{align}
\lim_{q\rightarrow1}{\mathcal T}_{q}\left(\rho_{AB} \right)=E_{\rm f}\left(\rho_{AB} \right),
\end{align}
where $E_{\rm f}(\rho_{AB})$ is the EoF of $\rho_{AB}$ defined
as~\cite{bdsw}
\begin{equation}
E_{\rm f}(\rho_{AB})=\min \sum_{i}p_i S(\rho^{i}_{A}). \label{eof}
\end{equation}
Here, the minimization is taken over all possible pure state
decompositions of $\rho_{AB}$, such that,
\begin{equation}
\rho_{AB}=\sum_{i} p_i |\phi^i\rangle_{AB}\langle\phi^i|,
\label{decomp}
\end{equation}
with $\T_{B}|\phi^i\rangle_{AB}\langle\phi^i|=\rho^{i}_{A}$.
In other words, Tsallis-$q$ entanglement is one-parameter generalization of EoF, and
the singularity of ${\mathcal T}_{q}\left(\rho_{AB}\right)$ at $q=1$ can be replaced by $E_{\rm f}(\rho_{AB})$.

Similarly, we have
\begin{align}
\lim_{q\rightarrow1}{\mathcal T}^a_{q}\left(\rho_{AB}
\right)=E^a\left(\rho_{AB} \right),
\label{TsallistoEoA}
\end{align}
where $E^a(\rho_{AB})$ is the {\em Entanglement of Assistance}~(EoA)
of $\rho_{AB}$ defined as~\cite{cohen}
\begin{equation}
E^a(\rho_{AB})=\max \sum_{i}p_i S(\rho^{i}_{A}). \label{eoa}
\end{equation}
Here, the maximum is taken over all possible pure state
decompositions of $\rho_{AB}$, such that,
\begin{equation}
\rho_{AB}=\sum_{i} p_i |\phi^i\rangle_{AB}\langle\phi^i|,
\label{decomp2}
\end{equation}
with $\T_{B}|\phi^i\rangle_{AB}\langle\phi^i|=\rho^{i}_{A}$.

\subsection{Analytic formula for two-qubit states}
\label{Subsec: 2formula}
Before we provide an analytic formula for Tsallis-$q$ entanglement in two-qubit systems,
let us first recall the definition of concurrence and its
functional relation with EoF in two-qubit systems.

For any bipartite pure state $\ket \psi_{AB}$, its concurrence,
$\mathcal{C}(\ket \psi_{AB})$ is defined as~\cite{ww}
\begin{equation}
\mathcal{C}(\ket \psi_{AB})=\sqrt{2(1-\T\rho^2_A)}, \label{pure
state concurrence}
\end{equation}
where $\rho_A=\T_B(\ket \psi_{AB}\bra \psi)$. For a mixed state
$\rho_{AB}$, its concurrence is defined as
\begin{equation}
\mathcal{C}(\rho_{AB})=\min \sum_k p_k \mathcal{C}({\ket
{\psi_k}}_{AB}). \label{mixed state concurrence}
\end{equation}
where the minimum is taken over all possible pure state
decompositions, $\rho_{AB}=\sum_kp_k{\ket {\psi_k}}_{AB}\bra
{\psi_k}$.

For two-qubit systems, concurrence is known to have an analytic
formula~\cite{ww}; for any two-qubit state $\rho_{AB}$,
\begin{equation}
\mathcal{C}(\rho_{AB})=\max\{0, \lambda_1-\lambda_2-\lambda_3-\lambda_4\},
\label{C_formula}
\end{equation}
where $\lambda_i$'s are the eigenvalues, in decreasing order, of
$\sqrt{\sqrt{\rho_{AB}}\tilde{\rho}_{AB}\sqrt{\rho_{AB}}}$ and
$\tilde{\rho}_{AB}=\sigma_y \otimes\sigma_y
\rho^*_{AB}\sigma_y\otimes\sigma_y$ with the Pauli operator
$\sigma_y$. Furthermore, the relation between concurrence and EoF of
a two-qubit mixed state $\rho_{AB}$ (or a pure state
$\ket{\psi}_{AB} \in \mathbb{C}^2 \otimes \mathbb{C}^{d}$,
$d\geq2$), can be given as a monotone increasing, convex
function~\cite{ww}, such that
\begin{equation}
 E_{\rm f} (\rho_{AB}) = {\mathcal E}(\mathcal{C}\left(\rho_{AB}\right)),
\end{equation}
where
\begin{equation} {\mathcal E}(x) = H\Bigl({1\over 2} + {1\over
2}\sqrt{1-x^2}\Bigr), \hspace{0.5cm}\mbox{for } 0 \le x \le 1,
\label{eps}
\end{equation}
with the binary entropy function $H(t) = -[t\log t + (1-t)\log
(1-t)]$. In other words, the analytic formula of concurrence as well
as its functional relation with EoF lead us to an analytic formula
for EoF in two-qubit systems.

For any $2\otimes d$ pure state $\ket{\psi}_{AB}$ (especially a
two-qubit pure state) with its Schmidt decomposition
$\ket{\psi}_{AB}=\sqrt{\lambda_{0}}\ket{0
0}_{AB}+\sqrt{\lambda_{1}}\ket{11}_{AB}$, its Tsallis-$q$
entanglement is
\begin{align}
{\mathcal T}_{q}\left(\ket{\psi}_{AB} \right)=T_{q}(\rho_A)
=\frac{1}{q-1}\left(1-\lambda_0^{q}-\lambda_1^{q} \right).
\label{TE2pure}
\end{align}

Because the concurrence of $\ket{\psi}_{AB}$ is
\begin{align}
\mathcal{C}(\ket \psi_{AB})=\sqrt{2(1-\T\rho^2_A)}
=\sqrt{\lambda_0 \lambda_1},
\end{align}
 it can be easily verified that
\begin{equation}
{\mathcal T}_{q}\left(\ket{\psi}_{AB}
\right)=g_{q}\left(\mathcal{C}(\ket \psi_{AB}) \right),
\label{relationpure}
\end{equation}
where $g_{q}(x)$ is an analytic function defined as
\begin{align}
g_{q}(x):=&\frac{1}{q-1}\left[1-\left(\frac{1+\sqrt{1-x^2}}{2}\right)^{q}
-\left(\frac{1-\sqrt{1-x^2}}{2}\right)^{q}\right]
\label{g_q}
\end{align}
on $0 \leq x \leq 1$. In other words, for any $2\otimes d$ pure
state $\ket{\psi}_{AB}$, we have a functional relation between its
concurrence and Tsallis-$q$ entanglement
 for each $q >0$. Note that $g_{q}(x)$ converges to the function ${\mathcal E}(x)$
in Eq.~(\ref{eps}) for the case when
$q$ tends to 1.

It was shown that there exists an optimal decomposition for the
concurrence of a two-qubit mixed state such that every pure state
concurrence in the decomposition has the same value~\cite{ww}: For
any two-qubit state $\rho_{AB}$, there exists a pure state
decomposition $\rho_{AB}=\sum_{i}p_i \ket{\phi_i}_{AB}\bra{\phi_i}$
such that
\begin{equation}
\mathcal{C}(\rho_{AB})=\sum_i p_i \mathcal{C}({\ket {\phi_i}}_{AB}),
\label{Copt}
\end{equation}
and
\begin{equation}
\mathcal{C}(\ket {\phi_i}_{AB})=\mathcal{C}(\rho_{AB}),
\label{Cphii}
\end{equation}
for each $i$. Based on this, one possible sufficient condition for
the relation in Eq.~(\ref{relationpure}) to be also true for
two-qubit mixed states is that the function $g_q(x)$ is
monotonically increasing and convex~\cite{suffi}. In other words, we
have
\begin{equation}
{\mathcal T}_{q}\left(\rho_{AB} \right)=g_{q}\left(\mathcal{C}(\rho_{AB}) \right)
\label{relationmixed}
\end{equation}
for any two-qubit mixed state $\rho_{AB}$ provided that $g_q(x)$ is
monotonically increasing and convex. Moreover, for the range of $q$
where $g_q(x)$ is monotonically increasing and convex,
Eq.~(\ref{relationmixed}) also implies an analytic formula of
Tsallis-$q$ entanglement for any two-qubit state.

Now, let us consider the monotonicity and convexity of $g_q(x)$ in
Eq.~(\ref{g_q}). Because $g_q(x)$ is an analytic function on $0\leq
x\leq1$, its monotonicity and convexity follow from the
nonnegativity of its first and second derivatives.

By taking the
first derivative of $g_q(x)$, we have
\begin{equation}
\frac{{\rm d}g_q(x)}{{\rm d}x}=\frac{qx\left[ {\left(1+\sqrt{1-x^2}\right)}^{q-1}-
{\left(1-\sqrt{1-x^2}\right)}^{q-1} \right]}{2^q(q-1)\sqrt{1-x^2}},
\label{1deri}
\end{equation}
which is always nonnegative on $0\leq x\leq1$ for $q>0$. It is also direct
to check that Eq.~(\ref{1deri}) is strictly positive for $0<x<1$. In
other words, $g_q(x)$ is a strictly monotone-increasing function for
any $q>0$.

For the second derivative of $g_q(x)$, we have
\begin{widetext}
\begin{align}
\frac{{\rm d}^2g_q(x)}{{\rm d}x^2}
=&\alpha\left[\frac{{\left(1+\sqrt{1-x^2}\right)}^{q-2}}{1-x^2}
\left(\frac{1+\sqrt{1-x^2}}{\sqrt{1-x^2}}-x^2(q-1)\right)
-\frac{{\left(1-\sqrt{1-x^2}\right)}^{q-2}}{1-x^2}
\left(\frac{1-\sqrt{1-x^2}}{\sqrt{1-x^2}}+x^2(q-1)\right)\right]
\label{2deri}
\end{align}
\end{widetext}
where $\alpha=\frac{q}{2^q(q-1)}$. Here, we first prove that
$g_q(x)$ is not convex for $q\geq5$ by showing the existence of
$x_0$ between 0 and 1 such that $\frac{{\rm d}^2g_q(x_0)}{{\rm
d}x^2}$ is negative. To see this, first note that the second term of
the right-hand side in Eq.~(\ref{2deri}) is always negative for
$0<x<1$ if $q>1$. Thus, it suffices to show that the first term of
the right-hand side in Eq.~(\ref{2deri}) is nonpositive at $x_0 \in
(0,1)$ for $q\geq5$. Furthermore, the only factor of the first term
that can be negative is
\begin{equation}
\left(\frac{1+\sqrt{1-x^2}}{\sqrt{1-x^2}}-x^2(q-1)\right),
\label{1term}
\end{equation}
since both $\alpha$ and
$\frac{{\left(1+\sqrt{1-x^2}\right)}^{q-2}}{1-x^2}$ are always
positive at $x \in (0,1)$ if $q>1$. By defining a function such that
\begin{equation}
h(x)=\frac{1-\sqrt{1-x^2}}{x^2\sqrt{1-x^2}}+1,
\label{h}
\end{equation}
the nonpositivity of Eq.~(\ref{1term})
is equivalent to
\begin{equation}
q\geq h(x).
\label{1term2}
\end{equation}
Since $h(x)$ is an analytic function on $0<x<1$, it is direct to
verify that it has a critical point at $x_0=\frac{\sqrt{3}}{2}$ with
$g_q(x_0)=5$, which is the global minimum. In other words, for
$q\geq5$, there always exists $x_0 \in (0,1)$ making
Eq.~(\ref{1term}) nonpositive, and thus $g_{q}(x)$ is not convex for
this region of $q$.

For the region of $q<5$, let us first consider the function $g_q(x)$ of
the integer value $q$, that is $q=1,~2,~3$ and $4$. If $q\rightarrow1$,
$g_q(x)$ converges to ${\mathcal E}(x)$ in Eq.~(\ref{eps}), which is
already known to be convex on $0\leq x\leq1$. Furthermore, we have
\begin{equation}
g_2(x)=\frac{x^2}{2},~g_3(x)=\frac{3x^2}{8},~g_4(x)=\frac{8x^2-x^4}{24},
\label{gint}
\end{equation}
which are convex polynomials on $0\leq x\leq1$.

In fact, if we consider $\frac{d^2g_q(x)}{dx^2}$ in
Eq.~(\ref{2deri}) as a function of $x$ and $q$
\begin{equation}
l(x,q)=\frac{{\rm d}^2g_q(x)}{{\rm d}x^2},
\label{l}
\end{equation}
defined on the domain ${\mathcal D}=\{(x,q)|0\leq x \leq 1, 1\leq q
\leq 4 \}$, it is tedious but also straightforward to check that
$l(x,q)$ does not have any vanishing gradient in the interior of
$\mathcal D$, and its function value on the boundary of $\mathcal D$
is always nonnegative. Because $l(x,q)$ is analytic in the interior
of $\mathcal D$, and continuous on the boundary, $l(x,q)$ is
nonnegative through whole the domain $\mathcal D$, and this implies
the convexity of $g_q(x)$ for $1 \leq q \leq 4$. Thus, we have the
following theorem.
\begin{Thm}
For $1\leq q \leq 4$,
\begin{align}
g_{q}(x)=&\frac{1}{q-1}\left[1-\left(\frac{1+\sqrt{1-x^2}}{2}\right)^{q}
-\left(\frac{1-\sqrt{1-x^2}}{2}\right)^{q}\right]
\end{align}
is a monotonically-increasing convex function on $0\leq x \leq 1$.
Furthermore, for this range of $q$, any two-qubit state $\rho_{AB}$
has an analytic formula for its Tsallis-$q$ entanglement such that
${\mathcal T}_{q}\left(\rho_{AB}
\right)=g_{q}\left(\mathcal{C}(\rho_{AB}) \right)$ where
$\mathcal{C}(\rho_{AB})$ is the concurrence of $\rho_{AB}$.
\label{2formula}
\end{Thm}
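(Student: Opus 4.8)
The plan is to reduce the two-qubit analytic formula ${\mathcal T}_{q}(\rho_{AB})=g_{q}({\mathcal C}(\rho_{AB}))$ entirely to the two analytic properties of $g_q$ asserted in the theorem, monotone increase and convexity, and then to establish those two properties directly from the derivatives in Eqs.~(\ref{1deri}) and~(\ref{2deri}). For the reduction I would prove both inequalities. Given any pure-state decomposition $\rho_{AB}=\sum_i p_i\ket{\psi_i}_{AB}\bra{\psi_i}$, the pure-state identity Eq.~(\ref{relationpure}) gives $\sum_i p_i {\mathcal T}_q(\ket{\psi_i}_{AB})=\sum_i p_i g_q({\mathcal C}(\ket{\psi_i}_{AB}))$; convexity of $g_q$ and Jensen's inequality bound this below by $g_q(\sum_i p_i {\mathcal C}(\ket{\psi_i}_{AB}))$, and monotonicity of $g_q$ together with $\sum_i p_i {\mathcal C}(\ket{\psi_i}_{AB})\geq {\mathcal C}(\rho_{AB})$ (the latter because concurrence is the minimizing average) yields $\geq g_q({\mathcal C}(\rho_{AB}))$; minimizing over decompositions gives ${\mathcal T}_q(\rho_{AB})\geq g_q({\mathcal C}(\rho_{AB}))$. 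For the reverse inequality I would invoke the optimal concurrence decomposition of Eqs.~(\ref{Copt}) and~(\ref{Cphii}), in which every pure state has concurrence exactly ${\mathcal C}(\rho_{AB})$; evaluating ${\mathcal T}_q$ on this decomposition gives $\sum_i p_i g_q({\mathcal C}(\rho_{AB}))=g_q({\mathcal C}(\rho_{AB}))$, which upper-bounds the minimum. Equality is then Eq.~(\ref{relationmixed}).

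For monotonicity I would prove nonnegativity of the first derivative in Eq.~(\ref{1deri}) on $[0,1]$ by a sign-matching argument. Since $1+\sqrt{1-x^2}\geq 1-\sqrt{1-x^2}\geq 0$, the bracket $(1+\sqrt{1-x^2})^{q-1}-(1-\sqrt{1-x^2})^{q-1}$ is nonnegative when $q>1$ and nonpositive when $q<1$; in either case it carries the same sign as the prefactor $1/(q-1)$, so the quotient is nonnegative, and strictly positive for $0<x<1$. This covers all $q>0$, in particular $1\leq q\leq4$, with the limiting case $q\to1$ giving ${\mathcal E}(x)$ of Eq.~(\ref{eps}).

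Convexity is the crux. I would first record the sharpness observation that $g_q$ fails to be convex for $q\geq5$: using $h(x)$ of Eq.~(\ref{h}), the factor Eq.~(\ref{1term}) becomes nonpositive at $x_0=\sqrt{3}/2$, where $h$ attains its global minimum value $5$, forcing the second derivative negative there; this explains why the admissible range stops at $q=4$. For $1\leq q\leq4$ I would treat $l(x,q)=g_q''(x)$ of Eq.~(\ref{l}) as a continuous function on the compact rectangle ${\mathcal D}=[0,1]\times[1,4]$ and show $l\geq0$ throughout by locating its global minimum: on a compact set the minimum is attained, and if it were attained in the interior then $\nabla l$ would vanish there, so I would verify that $\nabla l$ has no interior zero; the minimum is thereby forced onto the boundary, which I would check edge by edge. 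The edges $q=1$ and $q=4$ are immediate from the explicit convex forms in Eqs.~(\ref{eps}) and~(\ref{gint}) (for instance $g_4''(x)=(4-3x^2)/6\geq 1/6$), and the edge $x=0$ reduces to evaluating a limit of $l$.

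The hard part will be the convexity step for non-integer $q$ in $(1,4)$, since the clean polynomial forms of Eq.~(\ref{gint}) exist only at the integers and between them one is left with the two-variable expression Eq.~(\ref{2deri}) whose sign is not manifest. Two delicate points stand out: verifying that $\nabla l$ never vanishes in the interior of ${\mathcal D}$ is a genuinely messy computation, and the edge $x=1$ must be handled with care because $\sqrt{1-x^2}$ appears in denominators there, so I would pass to the limit $x\to1$ (or substitute $s=\sqrt{1-x^2}$ to expose the behavior) rather than evaluate naively. Should the direct gradient analysis prove too unwieldy, a fallback is to fix $x$ and study the one-variable map $q\mapsto l(x,q)$ on $[1,4]$, controlling its sign via its monotonicity in $q$ anchored at the explicitly convex integer endpoints.
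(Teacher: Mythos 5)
Your proposal follows the paper's own proof essentially step for step: the reduction of the mixed-state formula to monotonicity plus convexity via the Wootters equal-concurrence decomposition on one side and Jensen's inequality plus monotonicity on the other is exactly the argument the paper gives in its footnote, and your convexity strategy (non-convexity for $q\geq5$ via $h(x)$ with minimum $5$ at $x_0=\sqrt{3}/2$, explicit convexity at the integer values, then treating $l(x,q)=g_q''(x)$ on the compact domain and forcing the minimum to the boundary by excluding interior critical points) is precisely the paper's. The one computation you flag as delicate --- that $\nabla l$ has no interior zero --- is the same step the paper itself only asserts as ``tedious but straightforward,'' so your plan is faithful to the published argument, including its unverified core.
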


Due to the continuity of $g_q(x)$ with respect to $q$, we can always
assure the convexity of $g_q(x)$ for some region of $q$ slightly
less than 1 or larger than 4. Furthermore, the continuity of
$l(x,q)$ in Eq.~(\ref{l}) also assures the existence of $q_0$
between 4 and 5, at which the convexity of $g_q(x)$ starts being
violated. However, it is generally hard to get an algebraic solution
of such $q_0$ since $\frac{{\rm d}^2g_q(x)}{{\rm d}x^2}$ in
Eq.~(\ref{2deri}) is not an algebraic function with respect to $q$.
Here, we have a numerical way of calculation to test various values
of $x$ and $q$, and it is illustrated in Figure~\ref{figderi2}.
\begin{figure}
\hfill
\parbox{4.3cm}{
\begin{center}
\includegraphics[width=\linewidth]{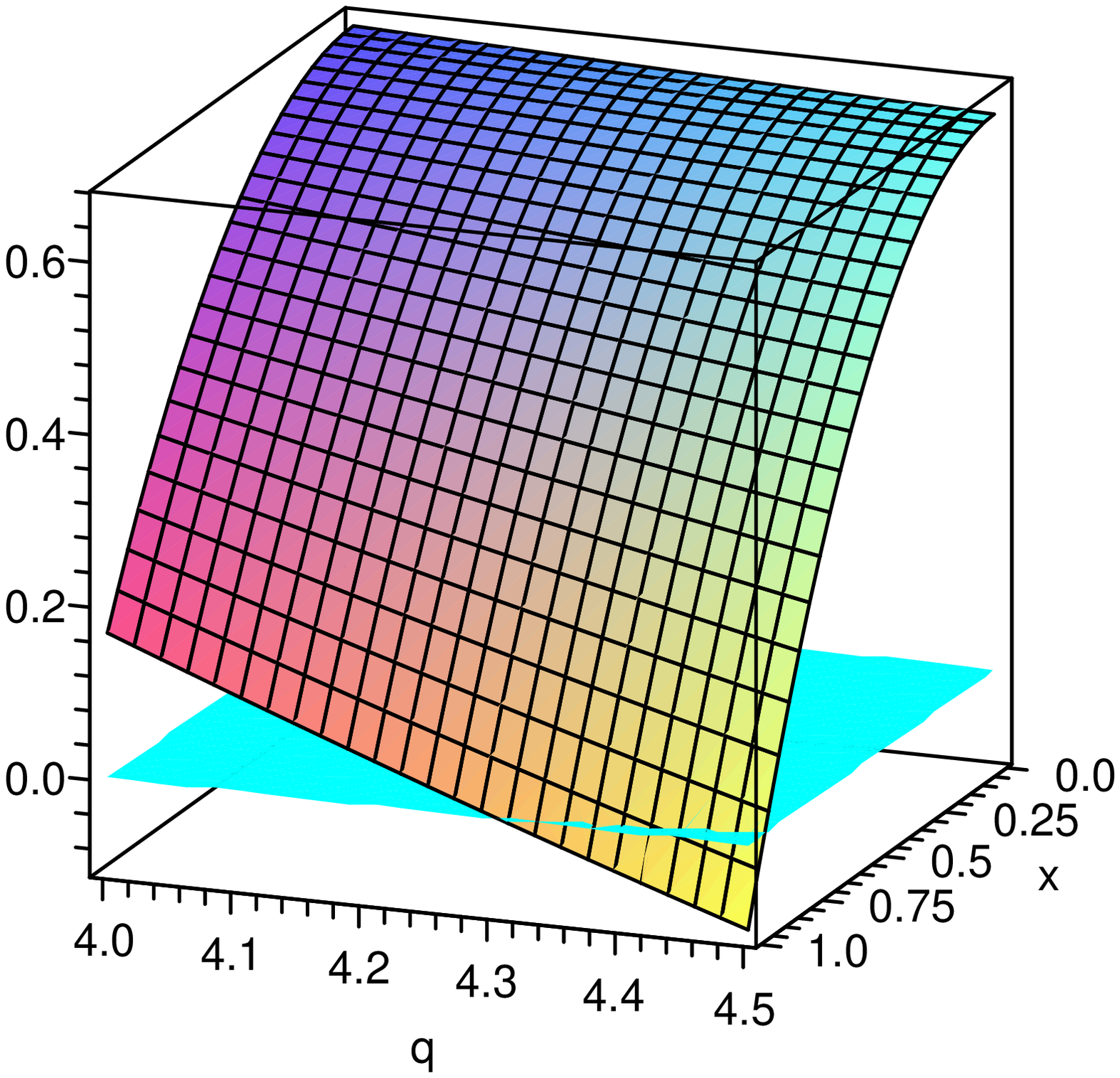}\\
(a)
\end{center}
}
\hfill
\parbox{4cm}{
\begin{center}
\includegraphics[width=\linewidth]{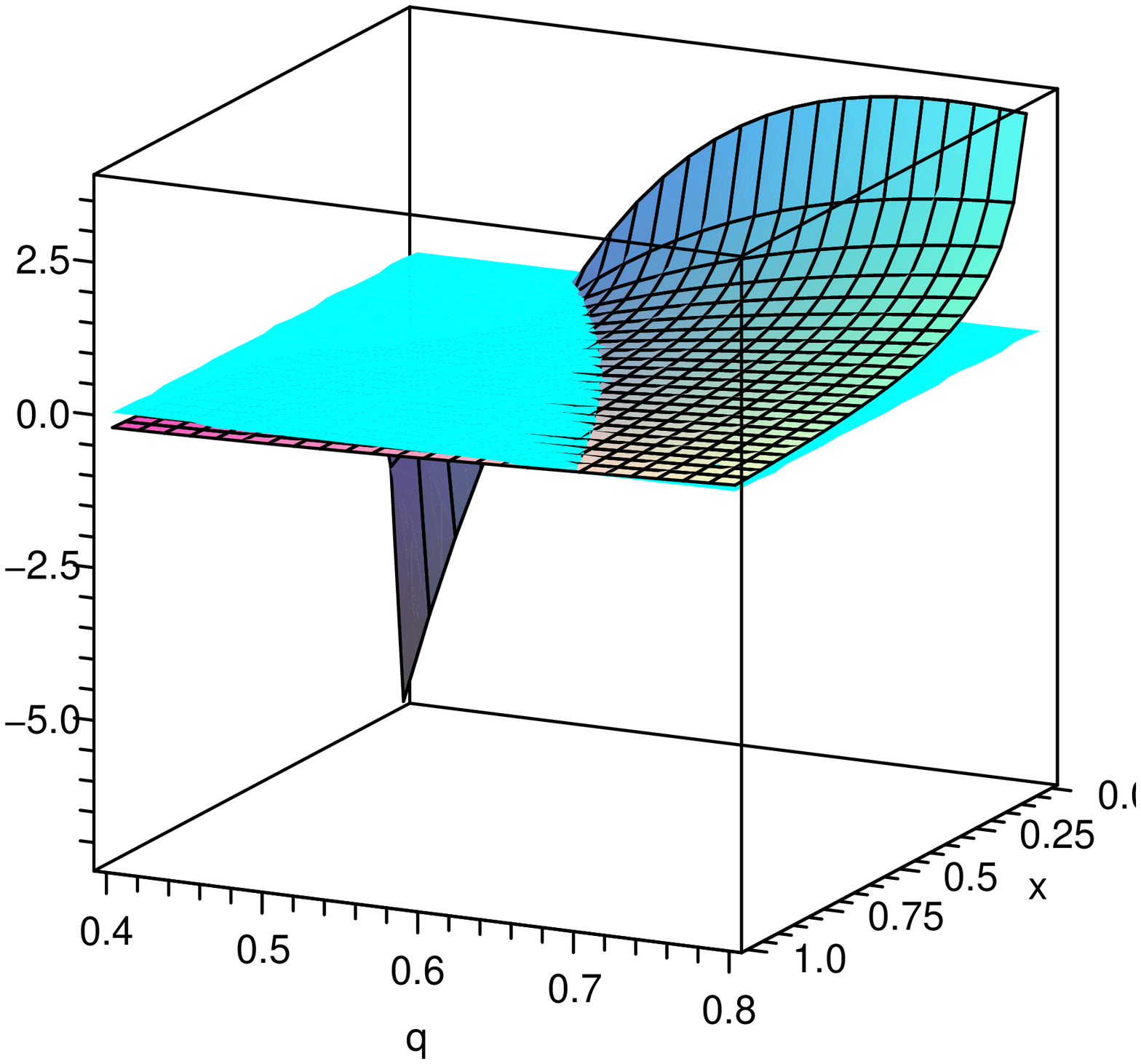}\\
(b)
\end{center}
}
\caption{The function values of $\frac{{\rm d}^2g_q(x)}{{\rm d}x^2}$ for $4\leq q\leq 4.5$ and $0.4\leq q\leq 0.8$
are illustrated in picture (a) and (b) respectively.
 }\label{figderi2}
\end{figure}
According to Figure~\ref{figderi2}, $g_q(x)$ is convex for the
region $0.7\leq q \leq 4.2$, and thus the analytic formula of
Tsallis-$q$ entanglement for two-qubit states in
Eq.~(\ref{relationmixed}) can also be claimed for this region of
$q$.


\section{Multi-qubit Entanglement constraint in terms of Tsallis-$q$ Entanglement}
\label{Sec: monopoly}

Using concurrence as the bipartite entanglement measure, the
monogamous property of a multi-qubit pure state
$\ket{\psi}_{A_1A_2\cdots A_n}$ was shown to have a mathematical
characterization as,
\begin{equation}
\mathcal{C}_{A_1 (A_2 \cdots A_n)}^2  \geq  \mathcal{C}_{A_1 A_2}^2
+\cdots+\mathcal{C}_{A_1 A_n}^2, \label{nCmono}
\end{equation}
where $\mathcal{C}_{A_1 (A_2 \cdots
A_n)}=\mathcal{C}(\ket{\psi}_{A_1(A_2\cdots A_n)})$ is the
concurrence of $\ket{\psi}_{A_1A_2\cdots A_n}$ with respect to the
bipartite cut between $A_1$ and the others, and
$\mathcal{C}_{A_1A_i}=\mathcal{C}(\rho_{A_1A_i})$ is the concurrence
of the reduced density matrix $\rho_{A_1A_i}$ for $i=2,\ldots,
n$~\cite{ckw,ov}.

As a dual value to concurrence, {\em Concurrence of Assistance} (CoA)~\cite{lve} of
a bipartite state $\rho_{AB}$ is defined as
\begin{equation}
\mathcal{C}^a(\rho_{AB})=\max \sum_k p_k \mathcal{C}({\ket {\psi_k}}_{AB}),
\label{CoA}
\end{equation}
where the maximum is taken over all possible pure state decompositions of
$\rho_{AB}=\sum_k p_k \ket{\psi_k}_{AB}\bra{\psi_k}$. Furthermore, it was also
shown that there exists a {\em polygamy} (or dual monogamy) relation of
multi-qubit entanglement in terms of CoA~\cite{gbs}:
For any multi-qubit pure state $\ket{\psi}_{A_1 \cdots A_n}$,
we have
\begin{equation} \mathcal{C}_{A_1 (A_2 \cdots
A_n)}^2  \leq  (\mathcal{C}^a_{A_1 A_2})^2
+\cdots+(\mathcal{C}^a_{A_1 A_n})^2, \label{nCdual}
\end{equation}
where $\mathcal{C}^a_{A_1 A_i}$ is the CoA of the reduced density
matrix $\rho_{A_1A_i}$ for $i=2,\ldots, n$.

Here, we show that this monogamous and polygamous property of
multi-qubit entanglement can also be characterized in terms of
Tsallis-$q$ entanglement and TEoA. Before this, we provide an
important property of the function $g_q(x)$ in Eq.~(\ref{g_q}) for
the proof of multi-qubit monogamy and polygamy relations.

For each $q>0$, let us define a two-variable function $m_q(x,y)$,
\begin{equation}
m_q(x,y):=g_q\left(\sqrt{x^2+y^2}\right)-g_q(x)-g_q(y),
\label{m_q}
\end{equation}
on the domain ${\mathcal D}=\{ (x,y)| 0\leq x, y, x^2+y^2 \leq1\}$.
Since $m_q(x,y)$ is continuous on the domain $\mathcal D$ and
analytic in the interior, its maximum or minimum values can arise
only at the critical points or on the boundary of $\mathcal D$. By
taking the first-order partial derivatives of $m_q(x,y)$, we have
its gradient
\begin{equation}
\nabla m_{p}(x, y)=\left(\frac{\partial
m_{p}(x,y)}{\partial x}, \frac{\partial
m_{p}(x,y)}{\partial y}\right)
\label{grad}
\end{equation}
where
\begin{widetext}
\begin{align}
\frac{\partial m_{q}(x,y)}{\partial x}=&\alpha x\left[\frac{{\left(1+\sqrt{1-x^2-y^2}\right)}^{q-1}-{\left(1+\sqrt{1-x^2-y^2}\right)}^{q-1}}{\sqrt{1-x^2-y^2}}
-\frac{{\left(1+\sqrt{1-x^2}\right)}^{q-1}-{\left(1+\sqrt{1-x^2}\right)}^{q-1}}{\sqrt{1-x^2}}\right]\nonumber\\
\frac{\partial m_{q}(x,y)}{\partial y}=&\alpha y\left[\frac{{\left(1+\sqrt{1-x^2-y^2}\right)}^{q-1}-{\left(1+\sqrt{1-x^2-y^2}\right)}^{q-1}}{\sqrt{1-x^2-y^2}}
-\frac{{\left(1+\sqrt{1-y^2}\right)}^{q-1}-{\left(1+\sqrt{1-y^2}\right)}^{q-1}}{\sqrt{1-y^2}}\right],
\label{2pderi}
\end{align}
\end{widetext}
with $\alpha=\frac{q}{2^q(q-1)}$.

Suppose there exists $(x_0, y_0)$ in the interior of $\mathcal D$
(that is, $0<x_0, y_0, x_0^2+y_0^2<1$) such that $\nabla m_{p}(x_0,
y_0)=0$. From Eq.~(\ref{2pderi}), it is straightforward to verify
that $\nabla m_{p}(x_0, y_0)=0$ is equivalent to
\begin{equation}
n_q(x_0)=n_q(y_0),
\label{x0y0}
\end{equation}
for an analytic function
\begin{equation}
n_q(t)=\frac{{\left(1+\sqrt{1-t^2}\right)}^{q-1}-{\left(1+\sqrt{1-t^2}\right)}^{q-1}}{\sqrt{1-t^2}},
\label{n_q}
\end{equation}
on $0<t<1$. Furthermore, it is straightforward to see that
$\frac{{\rm d}n_q(t)}{{\rm d}t}<0$ for $q>1$. In other words,
$n_q(t)$ is a strictly monotone-decreasing function with respect to
$t$ for $q>1$; therefore Eq.~(\ref{x0y0}) implies $x_0=y_0$.
However, from Eq.~(\ref{2pderi}), $\frac{\partial m_{q}(x_0,
y_0)}{\partial x}=0$ together with $x_0=y_0$ imply that
$n_q(\sqrt{2}x_0)=n_q(x_0)$, which contradicts to the strict
monotonicity of $n_q(t)$. Thus $m_q(x,y)$ has no vanishing gradient
in the interior of $\mathcal D$.

Now, let us consider the function values of $m_q(x,y)$ on the
boundary of $\mathcal D$. If $x=0$ or $y=0$, it is clear that
$m_q(x,y)=0$. For the case when $x^2+y^2=1$, $m_q(x,y)=0$ becomes a
single variable function
\begin{align}
b_q(x)=&\beta\left[\left(1+\sqrt{1-x^2}\right)^q +\left(1-\sqrt{1-x^2}\right)^q\right]\nonumber\\
&+\beta\left[\left(1+x\right)^q +\left(1-x\right)^q-2-2^q\right]
\label{m_qxy1}
\end{align}
with $\beta=\frac{1}{(q-1)2^q}$, which is an analytic function on
$0\leq x\leq1$. For the case when $q=2$ or $3$, it is clear form
Eq.~(\ref{gint}) that $m_q(x,y)=0$, and thus $b_q(x)=0$. If $q$ is
neither 2 nor 3, $b_q(x)$ has only one critical point at
$x=\frac{1}{\sqrt 2}$ for any $q>1$. Because $b_q(0)=b_q(1)=0$,
which are the function values at the boundary, the signs of the
function values of $b_q(x)$ are totally determined by that of
$b_q\left(\frac{1}{\sqrt 2} \right)$, which is the function value at
the critical point. Now, we have
\begin{align}
b_q\left(\frac{1}{\sqrt 2} \right)=&\frac{2}{(q-1)2^q}\left[\left(1+\frac{1}{\sqrt 2} \right)^q+
\left(1-\frac{1}{\sqrt 2} \right)^q\right]\nonumber\\
&-\frac{1}{(q-1)2^q}\left(2+2^{q}\right),
\label{bqsqrt1over2}
\end{align}
whose function value with respect to $q$ is illustrated in Figure~\ref{figbq}.

\begin{figure}
\includegraphics[width=4.5cm]{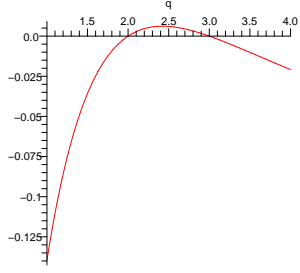}\\
\caption{The function values of $b_q\left(\frac{1}{\sqrt 2}\right)$ with respect to $q$
for $1<q\leq4$.}
\label{figbq}
\end{figure}

In other words, the function $m_q(x,y)$ in Eq.~(\ref{m_q}) has no
vanishing gradient in the domain $\mathcal D$ for $q>1$, and its
function values at the boundary of $\mathcal D$ is always
nonpositive for $1\leq q<2$ and $3<q\leq4$, whereas $m_q(x,y)$ is
always nonnegative for $2<q<3$. Thus, we have
\begin{equation}
g_q\left(\sqrt{x^2+y^2}\right)\leq g_q(x)+g_q(y)
\label{gqpoly}
\end{equation}
for $1<q<2$ and $3<q<4$, and
\begin{equation}
g_q\left(\sqrt{x^2+y^2}\right)\geq g_q(x)+g_q(y) \label{gqmono}
\end{equation}
for $2<q<3$. For the case when $q=2$ or 3, we have
\begin{equation}
g_q\left(\sqrt{x^2+y^2}\right)= g_q(x)+g_q(y).
\end{equation}

Now, we are ready to have the following theorem, which is the
monogamy inequality of multi-qubit entanglement in terms of
Tsallis-$q$ entanglement.

\begin{Thm}
For a multi-qubit state $\rho_{A_1 \cdots A_n}$ and $2\leq q \leq
3$, we have
\begin{equation}
{\mathcal T}_{q}\left( \rho_{A_1(A_2 \cdots A_n)}\right)\geq
{\mathcal T}_{q}(\rho_{A_1 A_2}) +\cdots+{\mathcal T}_{q}(\rho_{A_1
A_n}) \label{Tmono}
\end{equation}
where ${\mathcal T}_{q}\left( \rho_{A_1(A_2 \cdots A_n)}\right)$ is
the Tsallis-$q$ entanglement of $\rho_{A_1\left(A_2 \cdots
A_n\right)}$ with respect to the bipartite cut between $A_1$ and
$A_{2}\cdots A_{n}$, and ${\mathcal T}_{q}(\rho_{A_1 A_i})$ is the
Tsallis-$q$ entanglement of the reduced density matrix $\rho_{A_1
A_i}$ for $i=2,\cdots,n$. \label{Thm: mono}
\end{Thm}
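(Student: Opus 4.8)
The plan is to first prove the inequality for multi-qubit \emph{pure} states $\ket{\psi}_{A_1 \cdots A_n}$ and then lift it to arbitrary mixed states by a convex-roof argument. For the pure-state case, observe that with respect to the bipartite cut $A_1 | (A_2 \cdots A_n)$ the state is effectively a $2 \otimes d$ pure state, so Theorem~\ref{2formula} applies and gives ${\mathcal T}_q(\ket{\psi}_{A_1(A_2 \cdots A_n)}) = g_q(\mathcal{C}_{A_1(A_2 \cdots A_n)})$. The concurrence-based CKW monogamy inequality in Eq.~(\ref{nCmono}) yields $\mathcal{C}_{A_1(A_2 \cdots A_n)} \geq \sqrt{\mathcal{C}_{A_1 A_2}^2 + \cdots + \mathcal{C}_{A_1 A_n}^2}$, and since $g_q$ is monotonically increasing on $[0,1]$ (again Theorem~\ref{2formula}) this passes through $g_q$ to bound the left-hand side from below by $g_q\bigl(\sqrt{\mathcal{C}_{A_1 A_2}^2 + \cdots + \mathcal{C}_{A_1 A_n}^2}\bigr)$.

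Next I would repeatedly apply the super-additivity of $g_q$ established in Eq.~(\ref{gqmono}), namely $g_q(\sqrt{x^2+y^2}) \geq g_q(x) + g_q(y)$ for $2 \leq q \leq 3$, grouping $x = \mathcal{C}_{A_1 A_2}$ and $y = \sqrt{\mathcal{C}_{A_1 A_3}^2 + \cdots + \mathcal{C}_{A_1 A_n}^2}$ and inducting on $n$. At each stage the radicand stays within $[0,1]$ because the full sum is bounded by $\mathcal{C}_{A_1(A_2\cdots A_n)}^2 \leq 1$, so $g_q$ is always evaluated on its domain of validity. This telescopes to $g_q\bigl(\sqrt{\mathcal{C}_{A_1 A_2}^2 + \cdots + \mathcal{C}_{A_1 A_n}^2}\bigr) \geq g_q(\mathcal{C}_{A_1 A_2}) + \cdots + g_q(\mathcal{C}_{A_1 A_n})$, and each summand $g_q(\mathcal{C}_{A_1 A_j})$ equals ${\mathcal T}_q(\rho_{A_1 A_j})$ by the two-qubit analytic formula in Eq.~(\ref{relationmixed}), which establishes Eq.~(\ref{Tmono}) for pure states.

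For a mixed state $\rho_{A_1 \cdots A_n}$, I would choose a pure-state decomposition $\{p_i, \ket{\psi_i}\}$ that is \emph{optimal} for the left-hand side, so that ${\mathcal T}_q(\rho_{A_1(A_2 \cdots A_n)}) = \sum_i p_i {\mathcal T}_q(\ket{\psi_i}_{A_1(A_2 \cdots A_n)})$ by Eq.~(\ref{TEmixed}). Applying the pure-state inequality to each $\ket{\psi_i}$ and summing produces the lower bound $\sum_{j=2}^n \sum_i p_i {\mathcal T}_q(\rho^i_{A_1 A_j})$, where $\rho^i_{A_1 A_j}$ is the two-qubit reduced state of $\ket{\psi_i}$ on $A_1 A_j$. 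Since $\sum_i p_i \rho^i_{A_1 A_j} = \rho_{A_1 A_j}$ and the convex-roof extension ${\mathcal T}_q$ is convex by construction, each inner sum satisfies $\sum_i p_i {\mathcal T}_q(\rho^i_{A_1 A_j}) \geq {\mathcal T}_q(\rho_{A_1 A_j})$, and combining these completes the proof.

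The hard part will be the mixed-state step: one must simultaneously exploit the \emph{optimality} of the decomposition for the left-hand side and the \emph{convexity} of the convex roof for the marginal terms, since these two facts are used in opposite directions. A secondary technical point is verifying that the iterated radicands in the super-additivity step never leave $[0,1]$, so that Theorem~\ref{2formula} and Eq.~(\ref{gqmono}) remain applicable at every stage; this is guaranteed by the normalization of the single-qubit marginal $\rho_{A_1}$, which forces $\mathcal{C}_{A_1(A_2 \cdots A_n)} \leq 1$.
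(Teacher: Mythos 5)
Your proposal is correct and follows essentially the same route as the paper: the pure-state case via the functional relation ${\mathcal T}_q = g_q(\mathcal{C})$, the CKW inequality, monotonicity of $g_q$, and iterated super-additivity of $g_q$ on $2\leq q\leq 3$, followed by the mixed-state lift using an optimal decomposition for the left-hand side and convexity of the convex roof for the marginals. The only cosmetic difference is that the paper dispatches the endpoints $q=2,3$ separately via the explicit quadratic formulas $g_2(x)=x^2/2$, $g_3(x)=3x^2/8$ (where super-additivity holds with equality), while you absorb them into the general argument, which is equally valid.
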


\begin{proof}
For the case when $q=2$ or $3$, Eq.~(\ref{gint}) implies
\begin{equation}
{\mathcal T}_{2}\left(\rho_{AB}\right)=\frac{{{\mathcal
C}_{AB}}^2}{2},~{\mathcal
T}_{3}\left(\rho_{AB}\right)=\frac{3}{2}{{\mathcal C}_{AB}}^2,
\label{23}
\end{equation}
for any two-qubit mixed state or $2\otimes d$ pure state $\rho_{AB}$
and its concurrence ${\mathcal C}_{AB}$. Thus, the monogamy
inequality in Eq~(\ref{Tmono}) follows from Eqs.~(\ref{nCmono}) and
(\ref{23}).

For $2<q<3$, We first prove the theorem for $n$-qubit pure state
$\ket{\psi}_{A_1\cdots A_n}$. Note that Eq.~(\ref{nCmono}) is
equivalent to
\begin{equation}
\mathcal{C}_{A_1 (A_2 \cdots A_n)} \geq  \sqrt{\mathcal{C}_{A_1
A_2}^2 +\cdots+\mathcal{C}_{A_1 A_n}^2}, \label{nCmonoroot}
\end{equation}
for any $n$-qubit pure state $\ket{\psi}_{A_1 (A_2 \cdots A_n)}$.
Thus, from Eq.~(\ref{gqmono}) together with Eq.~(\ref{nCmonoroot}),
we have
\begin{align}
{\mathcal T}_{q}\left(\ket{\psi}_{A_1 (A_2 \cdots A_n)} \right)=&
g_{q}\left(\mathcal{C}_{A_1 (A_2 \cdots A_n)}\right)\nonumber\\
\geq&
g_{q}\left(\sqrt{\mathcal{C}_{A_1 A_2}^2 +\cdots+\mathcal{C}_{A_1 A_n}^2}\right)\nonumber\\
\geq&g_{q}\left(\mathcal{C}_{A_1 A_2}\right)\nonumber\\
&+g_{q}\left(\sqrt{\mathcal{C}_{A_1 A_3}^2
+\cdots+\mathcal{C}_{A_1 A_n}^2}\right)\nonumber\\
&~~~~~~~\vdots\nonumber\\
\geq& g_{q}\left(\mathcal{C}_{A_1 A_2}\right)+\cdots+g_{q}\left(\mathcal{C}_{A_1 A_n}\right)\nonumber\\
=& {\mathcal T}_{q}\left(\rho_{A_1A_2}\right)+\cdots +{\mathcal
T}_{q}\left(\rho_{A_1A_n}\right) \label{monoineq}
\end{align}
where the first equality is by the functional relation between the
concurrence and the Tsallis-$q$ entanglement for $2\otimes d$ pure
states, the first inequality is by the monotonicity of $g_{q}(x)$,
the other inequalities are by iterative use of Eq.~(\ref{gqmono}),
and the last equality is by Theorem~\ref{2formula}.

For a $n$-qubit mixed state  $\rho_{A_1(A_2\cdots A_n)}$, let
$\rho_{A_1(A_2\cdots A_n)}=\sum_j p_j \ket{\psi_j}_{A_1(A_2\cdots
A_n)}\bra{\psi_j}$ be an optimal decomposition such that ${\mathcal
T}_{q}\left(\rho_{A_1(A_2\cdots A_n)}\right)=\sum_j p_j {\mathcal
T}_{q}\left(\ket{\psi_j}_{A_1(A_2\cdots A_n)}\right)$.

Because each $\ket{\psi_j}_{A_1(A_2\cdots A_n)}$ in the
decomposition is an $n$-qubit pure state, we have
\begin{widetext}
\begin{align}
{\mathcal T}_{q}\left(\rho_{A_1(A_2\cdots A_n)}\right)=&\sum_j p_j
{\mathcal T}_{q}\left(\ket{\psi_j}_{A_1(A_2\cdots
A_n)}\right)\nonumber\\
\geq&\sum_j p_j\left({\mathcal
T}_{q}\left(\rho^j_{A_1A_2}\right)+\cdots +{\mathcal
T}_{q}\left(\rho^j_{A_1A_n}\right) \right)\nonumber\\
=&\sum_j p_j{\mathcal T}_{q}\left(\rho^j_{A_1A_2}\right)+\cdots
+\sum_j p_j{\mathcal
T}_{q}\left(\rho^j_{A_1A_n}\right) \nonumber\\
\geq&{\mathcal T}_{q}\left(\rho_{A_1A_2}\right)+\cdots +{\mathcal
T}_{q}\left(\rho_{A_1A_n}\right), \label{Tmonomixed}
\end{align}
\end{widetext}
where $\rho^j_{A_1A_i}$ is the reduced density matrix of
$\ket{\psi_j}_{A_1(A_2\cdots A_n)}$ onto subsystem $A_1A_i$ for each
$i=2,\cdots,n$ and the last inequality is by definition of
Tsallis-$q$ entanglement for each $\rho_{A_1A_i}$.
\end{proof}

Now, let us consider the polygamy of multi-qubit entanglement using
Tsallis-$q$ entropy. We first note that the function $g_{q}(x)$ in
Eq.~(\ref{g_q}) can also relate CoA and TEoA of a two-qubit state
$\rho_{AB}$: By letting $\rho_{AB}=\sum_{i} p_i
\ket{\psi_i}_{AB}\bra{\psi_i}$ be an optimal decomposition for its
CoA, that is,
\begin{equation}
\mathcal{C}^{a}\left( \rho_{AB} \right)=\sum_{i} p_i \mathcal {C}\left( \ket{\psi_i}_{AB}\right),
\label{CoAopt}
\end{equation}
we have
\begin{align}
g_{q}\left( \mathcal{C}^{a}\left( \rho_{AB} \right)\right)=&g_{q}
\left( \sum_{i} p_i \mathcal {C}\left( \ket{\psi_i}_{AB}\right)\right)\nonumber\\
\leq& \sum_{i} p_i  g_{q}\left(\mathcal {C}\left( \ket{\psi_i}_{AB}\right)\right)\nonumber\\
=& \sum_{i} p_i  {\mathcal T}_{q}\left( \ket{\psi_i}_{AB}\right)\nonumber\\
\leq&{\mathcal T}^a_{q}(\rho_{AB})
\label{gTEoA}
\end{align}
where the first inequality can be assured by the convexity of
$g_{q}(x)$ and the last inequality is by the definition of TEoA.
Because $g_{q}(x)$ is convex for $1\leq q \leq 4$, Eq.~(\ref{gTEoA})
is thus true for this region of $q$. Furthermore, $g_{q}(x)$
satisfies the property of Eq.~(\ref{gqpoly}) for $1\leq q \leq 2$ or
$3\leq q \leq 4$. Thus, we have the following theorem of the
polygamy inequality in multi-qubit systems.

\begin{Thm}
For any multi-qubit state $\rho_{A_1 \cdots A_n}$ and $1\leq q \leq
2$ or $3\leq q \leq 4$, we have
\begin{equation}
{\mathcal T}_{q}\left( \rho_{A_1(A_2 \cdots A_n)}\right)\leq
{\mathcal T}^a_{q}(\rho_{A_1 A_2}) +\cdots+{\mathcal
T}^a_{q}(\rho_{A_1 A_n}) \label{Tpoly}
\end{equation}
where ${\mathcal T}_{q}\left( \rho_{A_1(A_2 \cdots A_n)}\right)$ is
the Tsallis-$q$ entanglement of $\ket{\psi}_{A_1\left(A_2 \cdots
A_n\right)}$ with respect to the bipartite cut between $A_1$ and
$A_{2}\cdots A_{n}$, and ${\mathcal T}^a_{q}(\rho_{A_1 A_i})$ is the
TEoA of the reduced density matrix $\rho_{A_1 A_i}$ for
$i=2,\cdots,n$. \label{Thm: poly}
\end{Thm}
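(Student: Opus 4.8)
The plan is to replicate the architecture of the proof of Theorem~\ref{Thm: mono}, but with every inequality reversed and with the assistance quantities $\mathcal{C}^a$ and ${\mathcal T}^a_q$ playing the roles that $\mathcal{C}$ and ${\mathcal T}_q$ played there. First I would dispose of the case of an $n$-qubit pure state $\ket{\psi}_{A_1\cdots A_n}$. With respect to the cut $A_1|(A_2\cdots A_n)$ this is a $2\otimes 2^{n-1}$ pure state, so the functional relation Eq.~(\ref{relationpure}) gives ${\mathcal T}_{q}(\ket{\psi}_{A_1(A_2\cdots A_n)})=g_{q}(\mathcal{C}_{A_1(A_2\cdots A_n)})$. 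I would then rewrite the CoA polygamy inequality Eq.~(\ref{nCdual}) in root form, $\mathcal{C}_{A_1(A_2\cdots A_n)}\leq\sqrt{(\mathcal{C}^a_{A_1 A_2})^2+\cdots+(\mathcal{C}^a_{A_1 A_n})^2}$, apply the monotonicity of $g_q$ from Theorem~\ref{2formula}, and strip off one term at a time by iterating the subadditivity bound Eq.~(\ref{gqpoly}), which holds precisely for $1\leq q\leq 2$ and $3\leq q\leq 4$. This produces ${\mathcal T}_{q}(\ket{\psi}_{A_1(A_2\cdots A_n)})\leq g_q(\mathcal{C}^a_{A_1 A_2})+\cdots+g_q(\mathcal{C}^a_{A_1 A_n})$, and a final termwise application of Eq.~(\ref{gTEoA}) replaces each $g_q(\mathcal{C}^a_{A_1 A_i})$ by ${\mathcal T}^a_q(\rho_{A_1 A_i})$, settling the pure-state inequality.

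Next I would lift this to a general mixed state $\rho_{A_1\cdots A_n}$. I would take an \emph{optimal} (minimizing) decomposition $\rho_{A_1(A_2\cdots A_n)}=\sum_j p_j \ket{\psi_j}\bra{\psi_j}$ realizing the convex-roof value, so that ${\mathcal T}_{q}(\rho_{A_1(A_2\cdots A_n)})=\sum_j p_j{\mathcal T}_{q}(\ket{\psi_j}_{A_1(A_2\cdots A_n)})$. Applying the pure-state bound to each $\ket{\psi_j}$ and summing gives ${\mathcal T}_{q}(\rho_{A_1(A_2\cdots A_n)})\leq\sum_i\sum_j p_j{\mathcal T}^a_q(\rho^j_{A_1 A_i})$, where $\rho^j_{A_1 A_i}$ denotes the reduced state of $\ket{\psi_j}$ onto $A_1A_i$.

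The decisive step is then to show $\sum_j p_j{\mathcal T}^a_q(\rho^j_{A_1 A_i})\leq{\mathcal T}^a_q(\rho_{A_1 A_i})$ for each fixed $i$, after which chaining the two bounds yields Eq.~(\ref{Tpoly}). This is exactly where the argument must diverge from the monogamy case: because TEoA is defined by a \emph{maximum} over decompositions in Eq.~(\ref{TEoA}), I would concatenate the optimal assisting decompositions of the individual $\rho^j_{A_1 A_i}$ into a single decomposition of the average $\rho_{A_1 A_i}=\sum_j p_j \rho^j_{A_1 A_i}$; this admissible decomposition has average Tsallis-$q$ entanglement $\sum_j p_j {\mathcal T}^a_q(\rho^j_{A_1 A_i})$, and since ${\mathcal T}^a_q(\rho_{A_1 A_i})$ is the supremum over all decompositions it dominates this particular one.

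The main obstacle I anticipate is precisely this ``superadditivity under mixing'' of TEoA. In Theorem~\ref{Thm: mono} the analogous step exploited the convexity of the convex-roof ${\mathcal T}_q$, so the averaged bound pointed the correct way for free; here the direction is reversed and I must verify carefully that the concatenated decomposition is genuinely admissible for $\rho_{A_1 A_i}$. The remaining ingredients—monotonicity of $g_q$, the subadditivity Eq.~(\ref{gqpoly}), and the transfer inequality Eq.~(\ref{gTEoA})—are all already established over the stated ranges $1\leq q\leq 2$ and $3\leq q\leq 4$, so no new estimate on $g_q$ is required.
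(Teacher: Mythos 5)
Your overall architecture (pure states first, then lifting to mixed states through a pure-state decomposition and the superadditivity of TEoA under mixing) matches the paper, and your mixed-state step is sound: concatenating the optimal assisting decompositions of the $\rho^j_{A_1A_i}$ gives an admissible decomposition of $\rho_{A_1A_i}=\sum_j p_j\rho^j_{A_1A_i}$, so $\sum_j p_j{\mathcal T}^a_q(\rho^j_{A_1A_i})\leq{\mathcal T}^a_q(\rho_{A_1A_i})$, which is exactly the inequality the paper invokes ``by definition of TEoA'' in Eq.~(\ref{Tpolymixed}). (The paper takes the decomposition optimal for ${\mathcal T}^a_q$ of the global state rather than for ${\mathcal T}_q$, thereby bounding the formally larger quantity ${\mathcal T}^a_q\left(\rho_{A_1(A_2\cdots A_n)}\right)$, but your variant proves the stated theorem equally well.)

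The genuine gap is in your pure-state step. The function $g_q$ of Eq.~(\ref{g_q}) is defined only on $0\leq x\leq 1$, and the subadditivity Eq.~(\ref{gqpoly}) is established only on the domain $\{(x,y)\,|\,0\leq x,y,\ x^2+y^2\leq1\}$ of $m_q$ in Eq.~(\ref{m_q}). In the monogamy proof this is harmless, because Eq.~(\ref{nCmono}) forces $\sum_i\mathcal{C}_{A_1A_i}^2\leq\mathcal{C}_{A_1(A_2\cdots A_n)}^2\leq1$; but the squared CoA's on the right side of Eq.~(\ref{nCdual}) obey no such constraint, and $\sum_i(\mathcal{C}^a_{A_1A_i})^2$ can genuinely exceed $1$. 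For the $n$-qubit GHZ state, each reduced state $\rho_{A_1A_i}=\frac{1}{2}\left(\ket{00}\bra{00}+\ket{11}\bra{11}\right)$ has $\mathcal{C}^a_{A_1A_i}=1$ (decompose it into the Bell states $(\ket{00}\pm\ket{11})/\sqrt{2}$), so the sum is $n-1\geq2$. In that case the expression $g_q\bigl(\sqrt{(\mathcal{C}^a_{A_1A_2})^2+\cdots+(\mathcal{C}^a_{A_1A_n})^2}\bigr)$ is not even defined, and neither monotonicity nor iterated subadditivity can be applied. This is precisely why the paper splits the pure-state proof into two cases: when $\sum_i(\mathcal{C}^a_{A_1A_i})^2\leq1$ its argument is yours; when the sum exceeds $1$, it instead bounds ${\mathcal T}_q\left(\ket{\psi}_{A_1(A_2\cdots A_n)}\right)\leq g_q(1)=:\gamma$ by monotonicity (since any concurrence is at most $1$), chooses the crossover index $k$ with $\sum_{i=2}^{k}(\mathcal{C}^a_{A_1A_i})^2\leq1<\sum_{i=2}^{k+1}(\mathcal{C}^a_{A_1A_i})^2$, subtracts the overshoot $T$ so that $1=\sum_{i=2}^{k+1}(\mathcal{C}^a_{A_1A_i})^2-T$, and only then applies Eq.~(\ref{gqpoly}) iteratively to arguments that all lie inside the domain, yielding $\gamma\leq\sum_i g_q(\mathcal{C}^a_{A_1A_i})\leq\sum_i{\mathcal T}^a_q(\rho_{A_1A_i})$. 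Without this second case your proof fails for states as basic as GHZ. A minor additional point: Eq.~(\ref{gqpoly}) is established on the open intervals $1<q<2$ and $3<q<4$ (with equality at $q=2,3$), so the endpoint $q=1$ requires the paper's separate appeal to the known EoA polygamy inequality rather than your blanket claim that the bound ``holds precisely for $1\leq q\leq2$ and $3\leq q\leq4$.''
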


\begin{proof}
We first prove the theorem for a $n$-qubit pure state, and
generalize it into mixed states.

For the case when $q$ tends to 1, Tsallis-$q$ entanglement converges
to EoA in Eq.~(\ref{eoa}). It was shown that the polygamy inequality
of multi-qubit systems can be shown in terms of EoA~\cite{bgk}. For
the case when $q=2$ or 3, it is also straightforward from
Eqs.~(\ref{gint}) and (\ref{nCdual}).

For a $n$-qubit pure state $\ket{\psi}_{A_1(A_2\cdots A_n)}$ and
$1<q < 2$ or $3< q < 4$, let us first assume that $
(\mathcal{C}^a_{A_1 A_2})^2 +\cdots+(\mathcal{C}^a_{A_1 A_n})^2 \leq
1$ in Eq.~(\ref{nCdual}). Then we have
\begin{align}
{\mathcal T}_{q}\left(\ket{\psi}_{A_1(A_2 \cdots A_n)}\right)
&=g_{q}({\mathcal C}_{A_1(A_2 \cdots A_n)})\nonumber\\
&\leq g_{q}\left(\sqrt{(\mathcal{C}^a_{A_1 A_2})^2
+\cdots+(\mathcal{C}^a_{A_1 A_n})^2 }\right)\nonumber\\
&\leq g_{q}\left( \mathcal{C}^a_{A_1 A_2}\right)\nonumber\\
&~+g_{q}\left(\sqrt{(\mathcal{C}^a_{A_1 A_3})^2
+\cdots+(\mathcal{C}^a_{A_1 A_n})^2}\right)\nonumber\\
&~~~~~~~\vdots\nonumber\\
&\leq g_{q}\left( \mathcal{C}^a_{A_1 A_2}\right)+
+\cdots+g_{q}\left(
\mathcal{C}^a_{A_1 A_n}\right)\nonumber\\
&\leq {\mathcal T}^a_{q}\left(\rho_{A_1 A_2}\right)
+\cdots+{\mathcal T}^a_{q}\left(\rho_{A_1 A_n}\right),
\end{align}
where the first inequality is due to the monotonicity of the
function $g_{q}(x)$, the second and third inequalities are obtained
by iterative use of Eq.~(\ref{gqpoly}), and the last inequality is
by Eq.~(\ref{gTEoA}).

Now, let us assume that $ (\mathcal{C}^a_{A_1 A_2})^2
+\cdots+(\mathcal{C}^a_{A_1 A_n})^2 > 1$. Due to the monotonicity of
$g_q(x)$, we first note that
\begin{align}
{\mathcal T}_{q}\left(\ket{\psi}_{A_1(A_2 \cdots A_n)}\right)=&
g_{q}\left({\mathcal C}\left(\ket{\psi}_{A_1(A_2 \cdots A_n)}\right)\right)\nonumber\\
\leq& g_{q}\left(1\right)\nonumber\\
=&\frac{1}{q-1}\left(1-\frac{1}{2^{q-1}}\right)
\end{align}
for any multi-qubit pure state $\ket{\psi}_{A_1(A_2 \cdots A_n)}$,
and $q>1$. By letting $\gamma
=\frac{1}{q-1}\left(1-\frac{1}{2^{q-1}}\right)$, it is thus enough
to show that ${\mathcal T}^a_{q}(\rho_{A_1 A_2}) +\cdots+{\mathcal
T}^a_{q}(\rho_{A_1 A_n}) \geq \gamma$.

Here, we note that there exists $k \in \{2,\ldots ,n-1 \}$ such that
\begin{align}
&(\mathcal{C}^a_{A_1 A_2})^2 +\cdots+(\mathcal{C}^a_{A_1 A_k})^2
\leq 1,\nonumber\\
&(\mathcal{C}^a_{A_1 A_2})^2 +\cdots+(\mathcal{C}^a_{A_1 A_{k+1}})^2
>1.
\end{align}
If we let
\begin{equation} T:=(\mathcal{C}^a_{A_1 A_2})^2
+\cdots+(\mathcal{C}^a_{A_1 A_{k+1}})^2-1,
\end{equation}
we have
\begin{align}
\gamma =&g_{q}\left(1\right)\nonumber\\
=& g_{q} \left( \sqrt{(\mathcal{C}^a_{A_1 A_2})^2
+\cdots+(\mathcal{C}^a_{A_1
A_{k+1}})^2-T} \right)\nonumber\\
\leq& g_{q} \left( \sqrt{(\mathcal{C}^a_{A_1 A_2})^2
+\cdots+(\mathcal{C}^a_{A_1 A_k})^2} \right)\nonumber\\
&~~~~~~+g_{q} \left( \sqrt{(\mathcal{C}^a_{A_1 A_{k+1}})^2-T} \right)\nonumber\\
\leq& g_{q} \left( \mathcal{C}^a_{A_1 A_2}\right)+\cdots
+q_{q} \left( \mathcal{C}^a_{A_1 A_k}\right)+
q_{q} ( \mathcal{C}^a_{A_1 A_{k+1}})\nonumber\\
\leq& {\mathcal T}^a_{q}(\rho_{A_1 A_2})+\cdots + {\mathcal
T}^a_{q}(\rho_{A_1 A_n}), \label{nTmonopure2}
\end{align}
where the first inequality is by using Eq.~(\ref{gqpoly}) with
respect to $(\mathcal{C}^a_{A_1 A_2})^2+\cdots+(\mathcal{C}^a_{A_1
A_k})^2$ and $(\mathcal{C}^a_{A_1 A_{k+1}})^2-T$, the second
inequality is by iterative use of Eq.~(\ref{gqpoly}) on
$(\mathcal{C}^a_{A_1 A_2})^2+\cdots+(\mathcal{C}^a_{A_1 A_k})^2$,
and the last inequality is by Eq.~(\ref{gTEoA}).

For a $n$-qubit mixed state  $\rho_{A_1(A_2\cdots A_n)}$, let
$\rho_{A_1(A_2\cdots A_n)}=\sum_j p_j \ket{\psi_j}_{A_1(A_2\cdots
A_n)}\bra{\psi_j}$ be an optimal decomposition for TEoA such that
${\mathcal T}^a_{q}\left(\rho_{A_1(A_2\cdots A_n)}\right)=\sum_j p_j
{\mathcal T}_{q}\left(\ket{\psi_j}_{A_1(A_2\cdots A_n)}\right)$.
Because each $\ket{\psi_j}_{A_1(A_2\cdots A_n)}$ in the
decomposition is an $n$-qubit pure state, we have
\begin{widetext}
\begin{align}
{\mathcal T}^a_{q}\left(\rho_{A_1(A_2\cdots A_n)}\right)=&\sum_j p_j
{\mathcal T}^a_{q}\left(\ket{\psi_j}_{A_1(A_2\cdots
A_n)}\right)\nonumber\\
\leq&\sum_j p_j\left({\mathcal
T}^a_{q}\left(\rho^j_{A_1A_2}\right)+\cdots +{\mathcal
T}^a_{q}\left(\rho^j_{A_1A_n}\right) \right)\nonumber\\
=&\sum_j p_j{\mathcal T}^a_{q}\left(\rho^j_{A_1A_2}\right)+\cdots
+\sum_j p_j{\mathcal
T}^a_{q}\left(\rho^j_{A_1A_n}\right) \nonumber\\
\leq&{\mathcal T}^a_{q}\left(\rho_{A_1A_2}\right)+\cdots +{\mathcal
T}^a_{q}\left(\rho_{A_1A_n}\right), \label{Tpolymixed}
\end{align}
\end{widetext}
where $\rho^j_{A_1A_i}$ is the reduced density matrix of
$\ket{\psi_j}_{A_1(A_2\cdots A_n)}$ onto subsystem $A_1A_i$ for each
$i=2,\cdots,n$ and the last inequality is by definition of TEoA for
each $\rho_{A_1A_i}$.
\end{proof}

Although Theorem~\ref{Thm: poly} provides the polygamy inequality of multi-qubit
entanglement in terms of TEoA for $1\leq q \leq 2$ or $3\leq q \leq 4$, it is also
clear that Eq.~(\ref{Tpoly}) is also true for $q$ slightly larger than 4 or
less than 1 due to its continuity with respect to $q$.


\section{Conclusion}
\label{Conclusion}

Using Tsallis-$q$ entropy, we have established a class of bipartite
entanglement measures, Tsallis-$q$ entanglement, and provided its
analytic formula in two-qubit systems for $1\leq q\leq 4$. Based on
the functional relation between concurrence and Tsallis-$q$
entanglement, we have shown that the monogamy of multi-qubit
entanglement can be mathematically characterized in terms of
Tsallis-$q$ entanglement for $2\leq q\leq 3$. We have also provided
a polygamy inequality of multi-qubit entanglement in terms of TEoA
for $1\leq q\leq 2$ and $3\leq q\leq 4$.

The class of monogamy and polygamy inequalities of multi-qubit
entanglement we provided here consists of infinitely many
inequalities parameterized by $q$. We believe that our result will
provide useful tools and strong candidates for general monogamy and
polygamy relations of entanglement in multipartite
higher-dimensional quantum systems, which is one of the most
important and necessary topics in the study of multipartite quantum
entanglement.

\section*{Acknowledgments}
This work was supported by {\it i}CORE, MITACS and USARO.


\begin{thebibliography}{1}

\bibitem{tele}
C. H. Bennett, G. Brassard, C. Crepeau, R. Jozsa, A. Peres and W. K.
Wootters, Phys. Rev. Lett. {\bf 70}, 1895 (1993).

\bibitem{noclon}
W. K. Wootters and W. H. Zurek,
Nature {\bf 299}, 802 (1982).

\bibitem{T04}
B. M. Terhal, IBM J. Research and Development 48, 71 (2004).

\bibitem{m}
L. Masanes,
Phys. Rev. Lett. {\bf 102}, 140501 (2009).

\bibitem{ww}
W. K. Wootters,
Phys. Rev. Lett. {\bf 80}, 2245 (1998).

\bibitem{ckw}
V. Coffman, J. Kundu and W. K. Wootters, Phys. Rev. A {\bf 61},
052306 (2000).

\bibitem{ov}
T. Osborne and F. Verstraete, Phys. Rev. Lett. {\bf 96}, 220503
(2006).

\bibitem{bdsw}
C. H. Bennett, D. P. DiVincenzo, J. A. Smolin and W. K. Wootters,
Phys. Rev. A {\bf 54}, 3824 (1996).

\bibitem{renyi}
A. R\'enyi,  {\em Proceedings of the Fourth Berkeley
Symposium on Mathematics, Statistics and Probability}
(University of California Press, Berkeley, 1960) {\bf 1}, p. 547-561 .

\bibitem{horo}
R. Horodecki, P. Horodecki and M. Horodecki,
Phys. Lett. A {\bf 210}, 377 (1996).

\bibitem{tsallis}
C. Tsallis, J. Stat. Phys. {\bf 52}, 479 (1988).

\bibitem{lv}
P. T. Landsberg and V. Vedral, Phys. Lett. A {\bf 247}, 211 (1998).

\bibitem{ks2}
J. S. Kim and B. C. Sanders, arXiv.org:0911.5180 (2009).

\bibitem{cohen}
O. Cohen, Phys. Rev. Lett. {\bf 80}, 2493 (1998).

\bibitem{suffi}
Due to the existence of the decomposition satisfying
Eqs~(\ref{Copt}) and (\ref{Cphii}), we have
\begin{align*}
g_{q}\left(\mathcal{C}(\rho_{AB}) \right)=&g_{q}\left(\sum_i
p_i\mathcal{C}(\ket{\phi_i}_{AB})\right)\nonumber\\
=&\sum_i
p_ig_{q}\left(\mathcal{C}(\ket{\phi_i}_{AB})\right)\nonumber\\
=&\sum_i
p_i{\mathcal T}_{q}(\ket{\phi_i}_{AB})\nonumber\\
\geq&{\mathcal T}_{q}\left(\rho_{AB}\right).
\end{align*}

Conversely, the existence of the optimal decomposition of
$\rho_{AB}=\sum_j q_j\ket{\mu_j}_{AB}\bra{\mu_j}$ for Tsallis-$q$
entanglement leads us to
\begin{align*}
{\mathcal T}_{q}\left(\rho_{AB}\right)=&\sum_j q_j{\mathcal
T}_{q}\left(\ket{\mu_j}_{AB}\right)\nonumber\\
=&\sum_j q_jg_{q}\left({\mathcal C}(\ket{\mu_j}_{AB})\right)\nonumber\\
\geq&g_{q}\left(\sum_j q_j{\mathcal C}(\ket{\mu_j}_{AB})\right)\nonumber\\
\geq&g_{q}\left({\mathcal C}(\rho_{AB})\right),\nonumber\\
\end{align*}
where the first and second inequalities are due to the convexity and
monotonicity of $g_q(x)$.

\bibitem{lve}
T. Laustsen, F. Verstraete and S. J. van Enk, Quantum Inf. Comput.
{\bf 3}, 64 (2003).

\bibitem{gbs}
G. Gour, S. Bandyopadhay and B. C. Sanders, J. Math. Phys. {\bf
48}, 012108 (2007).

\bibitem{bgk}
F. Buscemi, G. Gour and J. S. Kim,
Phys. Rev. A {\bf 80}, 012324 (2009).

\end{thebibliography}
\end{document}